\newtheorem*{theorem}{Theorem}
\newtheorem{proposition}{Proposition}
\newtheorem{lemma}{Lemma}
\newcommand{\E}{\mathbb{E}}
\renewcommand{\P}{\mathbb{P}}
\newcommand{\R}{\mathbb{R}}
\newcommand{\J}{\mathbf{J}} 
\begin{document}

\title{Yet Another Proof of the Entropy Power Inequality} 


\author{Olivier~Rioul,~\IEEEmembership{Member,~IEEE}
\thanks{O. Rioul is with the Department
of Communication and \hbox{Electronics}, LTCI, Telecom ParisTech, Paris-Saclay University, Paris, France. \hbox{E-mail}: {olivier.rioul@telecom-paristech.fr}}
}




\maketitle

\begin{abstract}
Yet another simple proof of the entropy power inequality is given, which avoids both the integration over a path of Gaussian perturbation and the use of Young's inequality with sharp constant or R\'enyi entropies. The proof is based on a simple change of variables,  is formally identical in one and several dimensions, and easily settles the equality case.
\end{abstract}

\begin{IEEEkeywords}
Entropy Power Inequality, Differential Entropy, Gaussian Variables, Optimal Transport.
\end{IEEEkeywords}

%

\section{Introduction}
%
%
%
%
\IEEEPARstart{T}{he} entropy power inequality (EPI) was stated by Shannon~\cite{Shannon48} in the form
\begin{equation}\label{epishannon}
e^{\frac{2}{n}h(X+Y)} \geq e^{\frac{2}{n}h(X)}+e^{\frac{2}{n}h(Y)}
\end{equation}
for any independent $n$-dimensional  random vectors $X,Y\in\R^n$ with densities and finite second moments, with equality if and only if $X$ and $Y$ are Gaussian with proportional covariances. Shannon gave an incomplete proof; the first complete proof was given by Stam~\cite{Stam59} using properties of Fisher's information.
A detailed version of Stam's proof was given by Blachman~\cite{Blachman65}. A very different proof was provided by Lieb~\cite{Lieb78} using Young's convolutional inequality with sharp constant. Dembo, Cover and Thomas~\cite{DemboCoverThomas91} provided a clear exposition of both Stam's and Lieb's proofs. Carlen and Soffer
gave an interesting variation of Stam's proof for one-dimensional variables~\cite{CarlenSoffer91}. 
Szarek and Voiculescu~\cite{SzarekVoiculescu00} gave a proof related to Lieb's but based on a variant of the Brunn-Minkowski inequality. Guo, Shamai and Verd\'u gave another proof based on the \hbox{I-MMSE} relation~\cite{GuoShamaiVerdu06,VerduGuo06}. A similar proof based on a relation between divergence and causal MMSE was given by Binia~\cite{Binia07}. Yet another proof based on properties of mutual information was proposed in~\cite{Rioul07,Rioul11}. A more involved proof based on a stronger form of the EPI that uses spherically symmetric rearrangements, also related to Young's inequality with sharp constant, was recently given by Wang and Madiman~\cite{WangMadiman14}. 

As first noted by Lieb~\cite{Lieb78}, the above Shannon's formulation~\eqref{epishannon} of the EPI is equivalent to 
\begin{equation}\label{epi}
h(\sqrt{\lambda} X + \sqrt{1-\lambda}\, Y) \geq \lambda h(X) +(1-\lambda) h(Y)
\end{equation}
for any $0<\lambda<1$. 
All available proofs of the EPI used this form\footnote{Stam's original proof~\cite{Stam59} is an exception, but it was later simplified by Dembo, Cover and Thomas~\cite{DemboCoverThomas91} using this form.}.
Proofs of the equivalence can be found in numerous papers, e.g.,~\cite[Thms. 4, 6, 7]{DemboCoverThomas91}, \cite[Lemma~1]{VerduGuo06}, \cite[Prop.~2]{Rioul11}, and~\cite[Thm~2.5]{MadimanMelbourneXu16}.  

There are a few technical difficulties for proving~\eqref{epi} which are not always explicitly stated in previous proofs. First of all, one should check that for any random vector $X$ with finite second moments, the differential entropy $h(X)$ is always well-defined---even though it could be equal to $-\infty$. This is a consequence of~\cite[Prop.~1]{Rioul11}; see also Appendix~\ref{A} for a precise statement and proof.
Now if both independent random vectors $X$ and $Y$ have densities and finite second moments, so has $\sqrt{\lambda} X + \sqrt{1-\lambda}\, Y$ and both sides of~\eqref{epi} are well-defined. Moreover, if either $h(X)$ or $h(Y)$ equals $-\infty$ then~\eqref{epi} is obviously satisfied. Therefore, one can always assume that $X$ and $Y$ have \emph{finite} differential entropies\footnote{A nice discussion of general necessary and sufficient conditions for the EPI~\eqref{epishannon} can be found in~\cite[\S~V-VI]{BobkovChistyakov15}.}.

Another technical difficulty is the requirement for smooth densities. More precisely, as noted in~\cite[Rmk.~10]{WangMadiman14} some previous proofs use implicitly that for any $X$ with arbitrary density and finite second moments and any  Gaussian\footnote{Throughout this paper we assume that Gaussian random vectors are non-degenerate (have non-singular covariance matrices). 
}
$Z$ independent of $X$,
\begin{equation}\label{linder}
\lim_{t\downarrow 0} h(X+\sqrt{t}Z) = h(X). 
\end{equation}
This was proved explicitly in~\cite[Lemma~3]{Rioul11} and~\cite[Thm.~6.2]{WangMadiman14} using the lower-semicontinuity of divergence;
 the same result can also be found in previous works that were not directly related to the EPI~\cite[Eq.~(51)]{BiniaZakaiZiv74},~\cite[Proof of Lemma~1]{Barron86},~\cite[Proof of Thm~1]{LinderZamir94}.

As a consequence, it is sufficient to prove the EPI for random vectors of the form $X+\sqrt{t}Z$ ($t>0$). Indeed, letting $Z'$ be an independent copy of $Z$ such that $(Z,Z')$ is independent of $(X,Y)$,
the EPI written for $X+\sqrt{t}Z$ and $Y+\sqrt{t}Z'$ reads
\begin{multline*}
h(\sqrt{\lambda} X + \sqrt{1-\lambda}\, Y + \sqrt{t} Z'') \\\geq \lambda h(X+\sqrt{t}Z) +(1-\lambda) h(Y+\sqrt{t}Z') 
\end{multline*}
where $Z''=\sqrt{\lambda} Z + \sqrt{1-\lambda}\, Z'$ is again identically distributed as $Z$ and $Z'$. Letting $t\to 0$ we obtain the general EPI~\eqref{epi}\footnote{A similar observation was done in~\cite{CarlenSoffer91} in a different context of the Ornstein-Uhlenbeck semigroup (instead of the heat semigroup).}.
Now, for any random vector $X$ and any $t>0$, $X+\sqrt{t}Z$ has a continuous and positive density. This can be seen using the properties of the characteristic function, similarly as in~\cite[Lemma~1]{Rioul11}; see Appendix~\ref{B} for a precise statement and proof.
Therefore, as already noticed in~\cite[\S~XI]{WangMadiman14}, one can always assume that $X$ and $Y$ have \emph{continuous, positive} densities. 

One is thus led to prove the following version of the EPI.

\begin{theorem}[EPI]
Let $X,Y$ be independent random vectors with continuous, positive densities and finite differential entropies and second moments.
For any $0<\lambda<1$,
\begin{equation}\tag{\ref{epi}}
h(\sqrt{\lambda} X + \sqrt{1-\lambda}\, Y) \geq \lambda h(X) +(1-\lambda) h(Y)
\end{equation}
with equality if and only if $X,Y$ are Gaussian with identical covariances. 
\end{theorem}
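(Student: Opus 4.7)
The plan is to prove the EPI via an optimal-transport change of variables based on the Knothe--Rosenblatt map. Introduce two independent standard Gaussian vectors $Z_1,Z_2$ in $\R^n$ and let $T_1,T_2\colon \R^n\to\R^n$ be the Knothe--Rosenblatt transport maps pushing the standard Gaussian density forward onto the densities of $X$ and $Y$; the continuity and strict positivity of the latter guarantee that each $T_i$ is a measurable bijection of $\R^n$ whose Jacobian $\J_{T_i}$ is lower triangular with strictly positive diagonal entries. Because $T_1(Z_1)$ and $T_2(Z_2)$ are independent copies of $X$ and $Y$, the task reduces to bounding from below
\[
h(S),\qquad S := \sqrt{\lambda}\,T_1(Z_1) + \sqrt{1-\lambda}\,T_2(Z_2).
\]

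The decisive manoeuvre is to rotate $(Z_1,Z_2)$ into another pair of independent standard Gaussians
\[
Z := \sqrt{\lambda}\,Z_1 + \sqrt{1-\lambda}\,Z_2,\qquad W := \sqrt{1-\lambda}\,Z_1 - \sqrt{\lambda}\,Z_2,
\]
with inverse $Z_1 = \sqrt{\lambda}Z + \sqrt{1-\lambda}W$, $Z_2 = \sqrt{1-\lambda}Z - \sqrt{\lambda}W$. For each fixed $w$, the map
\[
\Psi_w(z) := \sqrt{\lambda}\,T_1\bigl(\sqrt{\lambda}z + \sqrt{1-\lambda}w\bigr) + \sqrt{1-\lambda}\,T_2\bigl(\sqrt{1-\lambda}z - \sqrt{\lambda}w\bigr)
\]
has Jacobian $\J_{\Psi_w}(z) = \lambda\J_{T_1}(z_1) + (1-\lambda)\J_{T_2}(z_2)$, again lower triangular with positive diagonal and hence a bijection of $\R^n$. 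The change-of-variables formula for differential entropy then gives, conditionally on $W$,
\[
h(S\mid W) = h(Z) + \E\bigl[\log\det\bigl(\lambda\J_{T_1}(Z_1) + (1-\lambda)\J_{T_2}(Z_2)\bigr)\bigr].
\]

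The inequality now drops out from the triangular structure: the determinant is the product of diagonal entries, so coordinatewise concavity of $\log$ yields
\[
\log\det\bigl(\lambda\J_{T_1} + (1-\lambda)\J_{T_2}\bigr) \geq \lambda\log\det \J_{T_1} + (1-\lambda)\log\det \J_{T_2}.
\]
Taking expectations and using the change-of-variables identities $\E\log\det \J_{T_1}(Z_1) = h(X) - h(Z)$ and $\E\log\det \J_{T_2}(Z_2) = h(Y) - h(Z)$ (the $Z_i$ being standard Gaussian with entropy $h(Z)$), the right-hand side telescopes to $\lambda h(X) + (1-\lambda)h(Y)$; combining with the general bound $h(S)\geq h(S\mid W)$ completes the inequality.

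For the equality case, tightness in the concavity step forces $\partial_i T_{1,i}(Z_1) = \partial_i T_{2,i}(Z_2)$ almost surely for every $i$; independence of $Z_1,Z_2$ then makes each side a common positive constant $c_i$, so each $T_{k,i}$ is linear-in-$z_i$ with sub-diagonal part an unknown function of $z_{<i}$. Tightness of the conditioning step, namely $S\perp W$, enters through an induction on $i$: granted that $S_1,\ldots,S_{i-1}$ are $Z$-measurable (the base case $i=1$ being immediate), the independence $S \perp W$ forces the conditional expectation of $S_i$ given the preceding coordinates and $W$ to be free of $W$, and differentiating the resulting functional equation in the $W_j$ ($j<i$) pins down the unknown sub-diagonal functions to be affine with matching linear parts. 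Iterating in $i$, both $T_1$ and $T_2$ emerge as affine with a common lower-triangular linear part $A$, whence $X$ and $Y$ are Gaussian with the same covariance $AA^\top$. The main technical obstacle I foresee is justifying the change-of-variables identities for the Knothe--Rosenblatt maps under only continuity and positivity of the densities; this can be handled either by a preliminary Gaussian-smoothing approximation based on~\eqref{linder}, or by invoking the standard theory of monotone absolutely continuous one-dimensional rearrangements applied iteratively coordinate by coordinate.
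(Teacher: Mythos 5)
Your proposal is correct and follows essentially the same route as the paper: the Kn\"othe--Rosenblatt map from a standard Gaussian, the orthogonal rotation of the two Gaussian inputs, conditioning on the auxiliary variable $W$, the entropy change-of-variables lemma, and Jensen's inequality applied to the diagonal entries of the triangular Jacobians, with the same two tightness conditions driving the equality case. The only cosmetic differences are a sign in the rotation and your induction-on-coordinates phrasing of the $S\perp W$ equality analysis (the paper instead differentiates $\Theta_{\tilde y}$ in $\tilde y$); the regularity concern you raise is exactly what the paper's preliminary Gaussian-smoothing reduction and the continuity/positivity hypotheses are there to handle.
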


Previous proofs of~\eqref{epi} can be classified into two categories:
\begin{itemize}
\item proofs in~\cite{Stam59,Blachman65,CarlenSoffer91,GuoShamaiVerdu06,VerduGuo06,Binia07,Rioul07,Rioul11} rely on the integration over a path of a continuous Gaussian perturbation of some data processing inequality using either Fisher's information, the minimum mean-squared error (MMSE) or mutual information. As explained in~\cite[Eq.~(10)]{Rioul07}, \cite{Rioul11} and~\cite[Eq.~(25)]{MadimanBarron07}, it is interesting to note that in this context, Fisher's information and MMSE are complementary quantities;
\item proofs in~\cite{Lieb78,SzarekVoiculescu00,WangMadiman13,WangMadiman14} are related to Young's inequality with sharp constant or to an equivalent argumentation using spherically symmetric rearrangements, and/or the consideration of convergence of R\'enyi entropies.
\end{itemize}
It should also be noted that not all of the available proofs of~\eqref{epi} settle the \emph{equality} case---that equality in~\eqref{epi} holds \emph{only} for Gaussian random vectors with identical covariances.  Only proofs from the first category using Fisher's information were shown to capture the equality case. This was made explicit by Stam~\cite{Stam59}, Carlen and Soffer~\cite{CarlenSoffer91} 
and for more general fractional EPI's by Madiman and Barron~\cite{MadimanBarron07}.

In this paper, a simple proof of the Theorem is given that avoids both the integration over a path of a continuous Gaussian perturbation and the use of Young's inequality, spherically symmetric rearrangements, or R\'enyi entropies. It is based on a ``Gaussian to not Gaussian'' lemma proposed in~\cite{RioulCosta16} and is formally identical in one dimension ($n=1$) and in several dimensions ($n>1$). It also easily settles the equality case.


\section{From Gaussian to Not Gaussian}

The following ``Gaussian to not Gaussian'' lemma~\cite{RioulCosta16} will be used here only in the case where 
$X^*$ is a $n$-variate Gaussian vector, e.g., $X^*\sim\mathcal{N}(0,\mathbf{I})$, but holds more generally as $X^*$ needs not be Gaussian.

\begin{lemma}\label{NG2G}
Let $X=(X_1,\ldots,X_n)$ and $X^*=(X^*_1,\ldots,X^*_n)$ be any two $n$-dimensional random vectors in $\R^n$ with continuous, positive densities. 
There exists a diffeomorphism $\Phi$ whose Jacobian matrix is triangular with positive diagonal elements such that $X$ has the same distribution as~$\Phi(X^*)$.
\end{lemma}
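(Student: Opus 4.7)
The plan is to build $\Phi$ by the classical Knothe--Rosenblatt rearrangement, that is, coordinate by coordinate via conditional cumulative distribution functions. In one dimension, if $F$ and $F^*$ denote the CDFs of $X$ and $X^*$, then both are $C^1$ and strictly increasing on $\R$ (because the densities are continuous and strictly positive), hence they are diffeomorphisms onto $(0,1)$. I would set $\Phi = F^{-1}\circ F^*$ and check that $\Phi$ is a $C^1$ diffeomorphism of $\R$ with $\Phi'>0$ and $\Phi(X^*)\stackrel{d}{=} X$, which is immediate from the probability integral transform.

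For general $n$, I would iterate this idea using conditional CDFs. Writing $f_X$ and $f_{X^*}$ for the joint densities and using subscripts for the corresponding conditional CDFs, I would define
\begin{align*}
\Phi_1(x^*_1) &= F_{X_1}^{-1}\bigl(F_{X_1^*}(x^*_1)\bigr),\\
\Phi_k(x^*_1,\dots,x^*_k) &= F_{X_k\mid X_{1:k-1}}^{-1}\!\Bigl(F_{X^*_k\mid X^*_{1:k-1}}(x^*_k\mid x^*_{1:k-1})\;\Big|\;\Phi_1,\dots,\Phi_{k-1}\Bigr),
\end{align*}
for $k=2,\dots,n$, and set $\Phi=(\Phi_1,\dots,\Phi_n)$. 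By construction $\Phi_k$ depends only on $(x^*_1,\dots,x^*_k)$, so the Jacobian matrix $\J_\Phi$ is lower-triangular with diagonal entries $\partial\Phi_k/\partial x^*_k$; these are positive because, for any fixed conditioning values, the conditional CDFs on either side are strictly increasing $C^1$ bijections of $\R$ onto $(0,1)$. Verifying the pushforward property $\Phi(X^*)\stackrel{d}{=}X$ is then a routine induction on $k$ using the chain rule of conditioning.

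To justify the construction rigorously, I would first check that the continuity and strict positivity of $f_X$ and $f_{X^*}$ carry over to every marginal and conditional density that appears. The relevant marginals $f_{X_{1:k}}(x_{1:k})=\int f_X(x)\,dx_{k+1}\cdots dx_n$ are continuous (dominated convergence) and strictly positive (the integrand is strictly positive), and similarly for $X^*$; hence the conditional densities $f_{X_k\mid X_{1:k-1}}$ are continuous and strictly positive in all variables jointly. This gives continuously differentiable conditional CDFs with $\partial/\partial x_k>0$, so each $F_{X_k\mid X_{1:k-1}}(\,\cdot\mid x_{1:k-1})$ is a $C^1$ diffeomorphism $\R\to(0,1)$ whose inverse depends smoothly on the parameters $x_{1:k-1}$ (implicit function theorem).

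The main obstacle I anticipate is precisely this joint-regularity bookkeeping: showing that the composition of conditional-quantile maps is $C^1$ in all $n$ variables and bijective on $\R^n$, rather than merely measurable. Once the smoothness of each $F_{X_k\mid X_{1:k-1}}^{-1}$ in its arguments and parameters is established via the implicit function theorem, invertibility of $\Phi$ follows either from the global inverse function theorem (the triangular Jacobian has a strictly positive determinant everywhere) or, more cleanly, by constructing the inverse by the same recipe with the roles of $X$ and $X^*$ swapped. This inverse is again triangular with positive diagonal, which together with $\Phi(X^*)\stackrel{d}{=}X$ yields the desired diffeomorphism.
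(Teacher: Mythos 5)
Your proposal is correct and is essentially the paper's own first proof (Appendix~\ref{NG2G1}): the Kn\"othe--Rosenblatt construction of $\Phi_k$ by matching conditional cumulative distributions coordinate by coordinate, with the same triangularity and positivity of the diagonal Jacobian entries following from monotonicity of each conditional quantile map. Your closing remark about building the inverse by swapping the roles of $X$ and $X^*$ also echoes the paper's second proof via the multivariate inverse transform sampling method.
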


For completeness we present two proofs in the Appendix. The first proof in Appendix~ \ref{NG2G1} follows Kn\"othe~\cite{Knothe57}. The second proof  in Appendix~ \ref{NG2G2} is based on the (multivariate) inverse sampling method.

The essential content of this lemma is well known  in the theory of convex bodies~\cite[p.~126]{MilmanSchechtman86},%
%
%
~\cite[Thm.~3.4]{GiannopoulosMilman04}, \cite[Thm.~1.3.1]{ArtsteinGiannopoulosMilman15}
 where $\Phi$ is known as  the \emph{Kn\"othe map} between two convex bodies. The difference with Kn\"othe's map is that in Lemma~\ref{NG2G}, the determinant of the Jacobian matrix need not be constant. 
The Kn\"othe map is also closely related to the so-called Kn\"othe-Rosenblatt coupling in optimal transport theory~\cite{Villani03,Villani08}, and there is a large literature of optimal transportation arguments for geometric-functional inequalities such as the Brunn-Minkowki, isoperimetric, sharp Young, sharp Sobolev and Pr\'ekopa-Leindler inequalities.  The Kn\"othe map was used in the original paper by Kn\"othe~\cite{Knothe57} to generalize the Brunn-Minkowski inequality, by Gromov in~\cite[Appendix I]{MilmanSchechtman86} to obtain isoperimetric inequalities on manifolds and by Barthe~\cite{Barthe98} to prove the sharp Young's inequality. In a similar vein, other transport maps such as the Brenier map were used in~\cite{CorderoErausquinNazaretVillani04} for sharp Sobolev and Gagliardo-Nirenberg inequalities and in~\cite{CorderoErausquinMcCannSchmuckenschlager06} for a generalized Pr\'ekopa-Leindler inequality on manifolds with lower Ricci curvature bounds. Since the present paper was submitted, the Brenier map has also been applied to the stability of the EPI for log-concave densities~\cite{CourtadeFathiPananjady16}. 
All the above-mentionned geometric-functional inequalities are known to be closely related to the EPI (see e.g.,~\cite{DemboCoverThomas91}), and it is perhaps not too surprising to expect a direct proof of the EPI using an optimal transportation argument---namely, Kn\"othe map---which is what this paper is about.

Let $\Phi'$ be the \emph{Jacobian} (i.e., the determinant of the Jacobian matrix) of $\Phi$.
Since $\Phi'>0$, the usual change of variable formula reads
\begin{equation}\label{cvf}
\int f(x) \,\mathrm{d}x  =\int f(\Phi(x^*)) \Phi'(x^*) \,\mathrm{d}x^*.
\end{equation}
A simple application of this formula gives the following well-known lemma which was used in~\cite{RioulCosta16}.
\begin{lemma}\label{CVH}
For any diffeomorphism $\Phi$ with positive Jacobian $\Phi'>0$, if $h\bigl(\Phi(X^*)\bigr)$ is finite,
\begin{equation}\label{cvh}
h\bigl(\Phi(X^*)\bigr)= h(X^*) + \E \{ \log \Phi'(X^*) \}  .
\end{equation}
\end{lemma}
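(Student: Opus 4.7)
The plan is to combine the change-of-variables formula \eqref{cvf} with the explicit density transformation induced by $\Phi$. Let $p^*$ denote the density of $X^*$. Since $\Phi$ is a diffeomorphism with $\Phi'>0$, the density $p$ of $Y=\Phi(X^*)$ is given for $y=\Phi(x^*)$ by $p(y)=p^*(x^*)/\Phi'(x^*)$, or equivalently $p(\Phi(x^*))\,\Phi'(x^*)=p^*(x^*)$. I would establish this identity first, as it is the one computational ingredient.

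Next I would plug this into $h(\Phi(X^*))=-\int p(y)\log p(y)\,\mathrm{d}y$ and apply \eqref{cvf} with $f(y)=-p(y)\log p(y)$ to rewrite the integral as
\[
h(\Phi(X^*)) = -\int p(\Phi(x^*))\log p(\Phi(x^*))\,\Phi'(x^*)\,\mathrm{d}x^*
= -\int p^*(x^*)\log\!\frac{p^*(x^*)}{\Phi'(x^*)}\,\mathrm{d}x^*.
\]
Splitting the logarithm then formally yields $h(X^*)+\E\{\log\Phi'(X^*)\}$, which is the desired identity.

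The one delicate point, which I would address explicitly, is the splitting of the last integral: this is legitimate as soon as none of the three quantities is an indeterminate form $\infty-\infty$. The hypothesis that $h(\Phi(X^*))$ is finite is what I would use to rule this out, by showing that the positive and negative parts of $p^*\log(p^*/\Phi')$ are both integrable (one can compare to the corresponding parts of $p\log p$ under the change of variable, since both factors $\log p^*$ and $-\log\Phi'$ appear through this single combination). Once integrability is secured, linearity of the integral and the definition of expectation give the result. I expect this integrability bookkeeping to be the main (and essentially only) obstacle, since the algebraic manipulation itself is straightforward.
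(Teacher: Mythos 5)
Your proof is correct and takes essentially the same route as the paper's: identify the density relation $p(\Phi(x^*))\,\Phi'(x^*)=p^*(x^*)$, apply the change-of-variables formula \eqref{cvf} to $-p\log p$, and split the logarithm. The paper's proof is identical in substance (with $f,g$ in place of your $p,p^*$), though it does not spell out the integrability bookkeeping that you rightly flag as the only delicate point.
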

The proof is given for completeness.
\begin{proof}
Let $f(x)$ be the density of $\Phi(X^*)$ so that $g(x^*)=f(\Phi(x^*)) \Phi'(x^*)$ is the density of $X^*$. Then we have
$\int f(x) \log f(x)\,\mathrm{d}x  =\int f(\Phi(x^*)) \log  f(\Phi(x^*))  \,\cdot \Phi'(x^*) \mathrm{d}x^* =\int g(x^*) \log \bigl(g(x^*)/\Phi'(x^*)\bigr)\,\mathrm{d}x^*$ which yields~\eqref{cvh}.
\end{proof}

\section{Proof of the Entropy Power Inequality}

Let $X^*,Y^*$ be \emph{any} i.i.d. Gaussian random vectors, 
e.g., $\sim\mathcal{N}(0,\mathbf{I})$. For any $0<\lambda<1$, $\sqrt{\lambda} X^* + \sqrt{1-\lambda}\, Y^*$ 
is identically distributed
as $X^*$ and $Y^*$ and, therefore,
\begin{equation}
h(\sqrt{\lambda} X^* + \sqrt{1-\lambda}\, Y^*) = \lambda h(X^*) +(1-\lambda) h(Y^*).
\end{equation}
Subtracting both sides from both sides of~\eqref{epi} one is led to prove that 
\begin{multline}\label{tobeproved}
h(\sqrt{\lambda} X + \sqrt{1-\lambda}\, Y) - h(\sqrt{\lambda} X^* + \sqrt{1-\lambda}\, Y^*)   \\
\geq  \lambda \bigl(h(X)-h(X^*)\bigr) +(1-\lambda) \bigl(h(Y)-h(Y^*)\bigr).
\end{multline}
Let $\Phi$ be as in Lemma~\ref{NG2G}, so that $X$ has the same distribution as $\Phi(X^*)$. Similarly let $\Psi$ be such that $Y$ has the same distribution as $\Psi(Y^*)$.
Since $\sqrt{\lambda} X + \sqrt{1-\lambda}\, Y$ is identically distributed as $\sqrt{\lambda} \Phi(X^*) + \sqrt{1-\lambda}\, \Psi(Y^*)$,
\begin{align}\label{1}
&h(\sqrt{\lambda} X + \sqrt{1-\lambda}\, Y) - h(\sqrt{\lambda} X^* + \sqrt{1-\lambda}\, Y^*) 
\notag\\&= h\bigl(\sqrt{\lambda} \Phi(X^*) + \sqrt{1-\lambda}\, \Psi(Y^*)\bigr) - h(\sqrt{\lambda} X^* + \sqrt{1-\lambda}\, Y^*) .
\end{align}
On the other hand,  by Lemma~\ref{CVH},
\begin{align}
&\lambda \bigl(h(X)-h(X^*)\bigr) +(1-\lambda) \bigl(h(Y)-h(Y^*)\bigr) \notag\\
&=\lambda \bigl(h\bigl(\Phi(X^*)\bigr)-h(X^*)\bigr) +(1-\lambda) \bigl(h\bigl(\Psi(Y^*)\bigr)-h(Y^*)\bigr)
\notag\\&= \E \{  \lambda\log \Phi'(X^*) +(1-\lambda) \log \Psi'(Y^*) \}.\label{2}
\end{align}
Thus both sides of~\eqref{tobeproved} have been rewritten in terms of the Gaussian $X^*$ and $Y^*$.
We now compare~\eqref{1} and~\eqref{2}. Toward this aim we make the change of variable $(X^*,Y^*)\to (\tilde{X},\tilde{Y})$ where
\begin{equation}
\begin{cases}
\tilde{X} =  \sqrt{\lambda} X^* + \sqrt{1-\lambda}\, Y^* \\
\tilde{Y} =  -\sqrt{1-\lambda}\, X^* + \sqrt{\lambda} Y^* .
\end{cases} 
\end{equation}
Again $\tilde{X},\tilde{Y}$ are i.i.d. Gaussian 
and
\begin{equation}
\begin{cases}
X^* =  \sqrt{\lambda} \tilde{X} - \sqrt{1-\lambda}\, \tilde{Y} \\
Y^* =  \sqrt{1-\lambda}\, \tilde{X} + \sqrt{\lambda} \tilde{Y} .
\end{cases} 
\end{equation}
To simplify the notation define
\begin{equation}
\begin{split}
\Theta_{\tilde{y}}(\tilde{x}) &=  \sqrt{\lambda} \Phi(\sqrt{\lambda} \tilde{x} - \sqrt{1-\lambda}\, \tilde{y}) 
\\&\qquad+ \sqrt{1-\lambda}\, \Psi(\sqrt{1-\lambda}\, \tilde{x} + \sqrt{\lambda} \tilde{y}).
\end{split}
\end{equation}
Then~\eqref{1} becomes
\begin{equation}
\begin{split}
h(\sqrt{\lambda} X + \sqrt{1-\lambda}\, Y) - h(\sqrt{\lambda} X^* + \sqrt{1-\lambda}\, Y^*) \\
= h\bigl( \Theta_{\tilde{Y}}(\tilde{X})\bigr) - h(\tilde{X}). \label{3}
\end{split}
\end{equation}
Here Lemma~\ref{CVH} cannot be applied directly because $\Theta_{\tilde{Y}}(\tilde{X})$ is not a deterministic function of $\tilde{X}$. But since conditioning reduces entropy,
\begin{equation}
h\bigl( \Theta_{\tilde{Y}}(\tilde{X})\bigr) \geq h\bigl( \Theta_{\tilde{Y}}(\tilde{X})\big| \tilde{Y}\bigr) \label{CRE}
\end{equation}
Now for fixed $\tilde{y}$, since  $\Phi$ and $\Psi$ have triangular Jacobian matrices with positive diagonal elements, the Jacobian matrix of $\Theta_{\tilde{y}}$ is also triangular with positive diagonal elements. Thus, by Lemma~\ref{CVH}, 
\begin{equation}
h\bigl( \Theta_{\tilde{Y}}(\tilde{X})\big| \tilde{Y}=\tilde{y}\bigr) -h(\tilde{X}) =  \E \{ \log \Theta_{\tilde{y}}'(\tilde{X}) \}
\end{equation}
where $\Theta'_{\tilde{y}}$ is the Jacobian of the transformation $\Theta_{\tilde{y}}$.
Since $\tilde{X}$ and $\tilde{Y}$ are independent, averaging over $\tilde{Y}$ yields
\begin{equation}
h\bigl( \Theta_{\tilde{Y}}(\tilde{X})\big| \tilde{Y}\bigr) -h(\tilde{X}) =  \E \{ \log \Theta_{\tilde{Y}}'(\tilde{X}) \}.
\end{equation}
Therefore, by~\eqref{3}-\eqref{CRE}
\begin{equation}
\begin{split}
h(\sqrt{\lambda} X + \sqrt{1-\lambda}\, Y) - h(\sqrt{\lambda} X^* + \sqrt{1-\lambda}\, Y^*) \\
\geq  \E \{ \log \Theta_{\tilde{Y}}'(\tilde{X}) \}.
\end{split}\label{4}
\end{equation}
On the other hand,~\eqref{2} becomes
\begin{align}
\lambda \bigl(h&(X)-h(X^*)\bigr) +(1-\lambda) \bigl(h(Y)-h(Y^*)\bigr)  \notag\\
\begin{split}
&=\E \{  \lambda\log \Phi'(\sqrt{\lambda} \tilde{X} - \sqrt{1-\lambda}\, \tilde{Y}) \\&\qquad+(1-\lambda) \log \Psi'(\sqrt{1-\lambda}\, \tilde{X} + \sqrt{\lambda} \tilde{Y}) \} 
\end{split}
\\
\begin{split}
&= \sum_{i=1}^n \E \bigl\{ \lambda \log \frac{\partial \Phi_i}{\partial x_i}(\sqrt{\lambda} \tilde{X} - \sqrt{1-\lambda}\, \tilde{Y}) \\&\qquad\qquad+ (1-\lambda) \log \frac{\partial \Psi_i}{\partial y_i}(\sqrt{1-\lambda}\, \tilde{X} + \sqrt{\lambda} \tilde{Y}) \bigr\} 
\end{split}
\\   \label{jensen}
\begin{split}
&\leq  \sum_{i=1}^n \E \log \bigl\{ \lambda  \frac{\partial \Phi_i}{\partial x_i}(\sqrt{\lambda} \tilde{X} - \sqrt{1-\lambda}\, \tilde{Y}) \\&\qquad\qquad+ (1-\lambda)  \frac{\partial \Psi_i}{\partial y_i}(\sqrt{1-\lambda}\, \tilde{X} + \sqrt{\lambda} \tilde{Y}) \bigr\} 
\end{split}
\\
&= \sum_{i=1}^n \E \log    \frac{ \partial\bigl(\Theta_{\tilde{Y}}\bigr)_{i}}{\partial \tilde{x}_i}(\tilde{X})
= \E \log \Theta'_{\tilde{Y}}(\tilde{X}) \label{triangle}\\
&\leq h(\sqrt{\lambda} X + \sqrt{1-\lambda}\, Y) - h(\sqrt{\lambda} X^* + \sqrt{1-\lambda}\, Y^*) \label{final}
\end{align}
where in~\eqref{jensen} we have used Jensen's inequality $\lambda\log a+(1-\lambda)\log b\leq \log(\lambda a+(1-\lambda) b)$ on each component, in~\eqref{triangle} the fact that  the Jacobian matrix of $\Theta_{\tilde{y}}$ is triangular with positive diagonal elements, and~ \eqref{final} is~\eqref{4}.
This proves~\eqref{epi}.

\section{The Case of Equality}

Equality in~\eqref{epi} holds if and only if both~\eqref{CRE} and~\eqref{jensen} are equalities. Equality in~\eqref{jensen} holds if and only if for all $i=1,2\ldots,n$,
\begin{equation}
 \frac{\partial \Phi_i}{\partial x_i}(X^*) = \frac{\partial \Psi_i}{\partial y_i}(Y^*) \quad\text{a.e.}
\end{equation}
Since $X^*$ and $Y^*$ are independent Gaussian random vectors this implies that $\dfrac{\partial \Phi_i}{\partial x_i}$ and $\dfrac{\partial \Psi_i}{\partial y_i}$ are constant and equal. Thus in particular $\Theta'_{\tilde{y}}$ is constant.
Now equality in~\eqref{CRE} holds if and only if $\Theta_{\tilde{Y}}(\tilde{X})$ is independent of $\tilde{Y}$, thus $\Theta_{\tilde{y}}(\tilde{X})=\Theta(\tilde{X})$ does not depend on the particular value of $\tilde{y}$.
Thus for all $i,j=1,2,\ldots,n$,
\begin{equation}
\begin{split}
0&=\frac{\partial (\Theta_{\tilde{y}}(\tilde{X}))_i}{\partial \tilde{y}_j}\\& = -\sqrt{\lambda}\sqrt{1-\lambda} \,
\frac{\partial \Phi_i}{\partial x_j}(\sqrt{\lambda} \tilde{X} - \sqrt{1-\lambda}\, \tilde{Y})
\\&\quad\,+\sqrt{1-\lambda}\sqrt{\lambda} \frac{\partial \Psi_i}{\partial y_j}(\sqrt{1-\lambda}\, \tilde{X} + \sqrt{\lambda} \tilde{Y})
\end{split}
\end{equation}
which implies 
\begin{equation}
 \frac{\partial \Phi_i}{\partial x_j}(X^*) = \frac{\partial \Psi_i}{\partial y_j}(Y^*) \quad\text{a.e.},
\end{equation}
hence $\dfrac{\partial \Phi_i}{\partial x_j}$ and $\dfrac{\partial \Psi_i}{\partial y_j}$ are constant and equal for any $i,j=1,2,\ldots,n$.
Therefore, $\Phi$ and $\Psi$ are linear transformations, equal up to an additive constant. It follows that $\Phi(X^*)$ and $\Phi(Y^*)$ (hence $X$ and $Y$) are Gaussian with identical covariances.
This ends the proof of the Theorem. \hfill\qedsymbol

\bigskip

Extensions of similar ideas when $X^*,Y^*$ need not be Gaussian can be found in~\cite{Rioul17a}.

\appendices
\section{}\label{A}

The differential entropy $h(X)=-\int f\log f$ of a random vector $X$ with density $f$ is not always well-defined because the negative and positive parts of the integral might be both infinite, as in the example $f(x)=1/(2x\log^2 x)$ for $0<x<1/\mathrm{e}$ and $\mathrm{e}<x<+\infty$, and $=0$ otherwise~\cite{Rioul11}. 

\begin{proposition}
Let $X$ be an random vector with density $f$ and finite second moments. Then $h(X)=-\int f \log f$ is well-defined and $<+\infty$.
\end{proposition}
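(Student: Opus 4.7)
The idea is to compare $f$ with the Gaussian density $\phi$ on $\R^n$ having the same mean $\mu$ and covariance $\Sigma$ as $X$. Since $X$ has finite second moments, $\Sigma$ is a well-defined (possibly singular, but I would first reduce to the case where $\Sigma$ is non-singular by restricting to the affine support of $X$) positive semi-definite matrix, and
\[
-\log\phi(x) = \tfrac{1}{2}\log\bigl((2\pi)^n|\Sigma|\bigr) + \tfrac{1}{2}(x-\mu)^{\mathsf T}\Sigma^{-1}(x-\mu)
\]
is bounded below and equal to a constant plus a quadratic form in $x$. Because $X$ has finite second moments, integration against $f$ gives
\[
-\int f(x)\log\phi(x)\,\mathrm{d}x = \tfrac{1}{2}\log\bigl((2\pi)^n|\Sigma|\bigr) + \tfrac{n}{2} = h(\phi) < +\infty,
\]
a finite number. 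So $\int f\log\phi$ is well-defined and finite.

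The next step is to show that the relative entropy $D(f\|\phi) = \int f\log(f/\phi)\,\mathrm{d}x$ is well-defined in $[0,+\infty]$. For this I would invoke the elementary inequality $t\log t \geq t - 1$ for $t>0$, applied with $t = f/\phi$ and multiplied by $\phi$: this yields the pointwise lower bound
\[
f(x)\log\frac{f(x)}{\phi(x)} \geq f(x) - \phi(x).
\]
The right-hand side is integrable (its integral vanishes), so the negative part of $f\log(f/\phi)$ is integrable. Hence $D(f\|\phi)$ is well-defined as an element of $(-\infty,+\infty]$, and by the standard Gibbs argument (Jensen's inequality applied to the convex function $t\mapsto t\log t$, or integration of the pointwise bound above) it lies in $[0,+\infty]$.

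Now I can combine the two pieces. Writing
\[
\int f\log f\,\mathrm{d}x = D(f\|\phi) + \int f\log\phi\,\mathrm{d}x,
\]
the right-hand side is a sum of a quantity in $[0,+\infty]$ and a finite quantity, hence is well-defined and takes a value in $[\int f\log\phi, +\infty]$. Consequently $h(X) = -\int f\log f$ is well-defined in $[-\infty, h(\phi)]$, proving both the well-definedness and the upper bound $h(X)\leq h(\phi) < +\infty$.

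The only subtle point is the bookkeeping around $\pm\infty$: one must resist the temptation to write $D(f\|\phi) = \int f\log f - \int f\log\phi$ before knowing that $\int f\log f$ is even defined. The clean way, which I would follow, is to establish well-definedness of $D(f\|\phi)$ first via the pointwise inequality $f\log(f/\phi)\geq f-\phi$, and only then decompose $\int f\log f$ as above. The case of a degenerate covariance is handled by restricting to the affine span of the support, where the density is with respect to the corresponding lower-dimensional Lebesgue measure, and applying the same argument.
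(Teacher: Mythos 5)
Your proof is correct and follows essentially the same route as the paper: compare $f$ to a Gaussian density, use finite second moments to make $\int f\log\phi$ finite, and use non-negativity of the divergence to bound $-\int f\log f$ from above, with your pointwise bound $f\log(f/\phi)\geq f-\phi$ merely making explicit the integrability bookkeeping the paper leaves implicit. One small remark: the worry about a singular covariance is vacuous here, since a random vector with a density with respect to Lebesgue measure on $\R^n$ cannot be concentrated on a proper affine subspace, and in any case the paper sidesteps the issue by taking an arbitrary non-degenerate Gaussian rather than the moment-matched one.
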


\begin{proof}
Let $Z$ be any Gaussian vector with density $g>0$. On one hand, since $X$ has finite second moments, the integral $\int f \log g$ is finite. 
On the other hand, since $g$ never vanishes, the probability measure of $X$ is absolutely continuous with respect to that of $Z$. Therefore, the divergence $D(f\|g)$ is equal to the integral $\int f \log (f/g)$. Since the divergence is non-negative, it follows that $-\int f\log f =  -\int f \log g - D(f\|g)\leq  -\int f \log g$ is well-defined and $<+\infty$ (the positive part of the integral is finite).
\end{proof}

\section{}\label{B}

It is stated in~\cite[Appendix~II~ A]{GengNair14}  that strong smoothness properties of distributions of $Y=X+Z$ for independent Gaussian $Z$ are ``very well known in certain mathematical circles'' but it seems difficult to find a reference.

The following result is stated for an arbitrary random vector $X$. It is not required that $X$ have a density. It could instead follow e.g., a discrete distribution. 

\begin{proposition}
Let $X$ be any random vector and $Z$ be any independent Gaussian vector with density $g>0$. Then $Y=X+Z$ has a bounded, positive, indefinitely differentiable (hence continuous) density that tends to zero at infinity, whose all derivatives are also bounded and tend to zero at infinity.
\end{proposition}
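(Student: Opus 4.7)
The plan is to write the density of $Y=X+Z$ as the convolution
\[
f_Y(y) = \E\{g(y-X)\},
\]
and then exploit the fact that the Gaussian density $g$ is a Schwartz function: $g$ and every partial derivative $\partial^\alpha g$ are smooth and uniformly bounded on $\R^n$, and decay to zero at infinity faster than any polynomial.

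First I would verify, via Fubini applied to
\[
\P(Y\in A) = \E\{\P(Z\in A - X \mid X)\} = \E\Bigl\{\int_A g(y-X)\,\mathrm{d}y\Bigr\},
\]
that $f_Y(y) := \E\{g(y-X)\}$ is indeed a version of the density of $Y$. Positivity is then immediate from $g>0$, and the trivial bound $f_Y(y) \le \|g\|_\infty < +\infty$ yields boundedness.

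Next, to obtain smoothness I would differentiate under the expectation. For any multi-index $\alpha$, the function $\partial^\alpha g$ is a polynomial times a Gaussian, hence uniformly bounded on $\R^n$; by the mean value theorem, each coordinate-wise difference quotient of $\partial^\alpha g$ is dominated by the deterministic constant $\|\partial_i \partial^\alpha g\|_\infty$, which is trivially integrable against the law of $X$. An iterated application of the dominated convergence theorem then gives
\[
\partial^\alpha f_Y(y) = \E\{\partial^\alpha g(y-X)\}
\]
for every $\alpha$, with the pointwise bound $|\partial^\alpha f_Y(y)| \le \|\partial^\alpha g\|_\infty$.

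Finally, for the decay at infinity, take any sequence $y_k$ with $|y_k|\to\infty$; for each realization $x$ one has $|y_k - x|\to\infty$ and hence $\partial^\alpha g(y_k - x)\to 0$, so a further appeal to dominated convergence with constant dominating function $\|\partial^\alpha g\|_\infty$ yields $\partial^\alpha f_Y(y_k)\to 0$. The only real difficulty is bookkeeping—verifying the integrable dominating bounds needed to justify each differentiation- and limit-under-the-expectation step—which is effortless precisely because $g$ is a Schwartz function and because those bounds are deterministic constants, so no moment or density assumption on $X$ is ever needed.
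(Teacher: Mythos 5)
Your proof is correct, but it takes a genuinely different route from the paper's. The paper works on the Fourier side: it bounds the characteristic function of $Y$ by $|\hat{g}(t)|$, notes the resulting exponential decay, and invokes the Riemann--Lebesgue lemma (applied to $(it)^\alpha\E(e^{it\cdot Y})$ for each monomial $t^\alpha$) to obtain a bounded continuous density with bounded continuous derivatives all vanishing at infinity; positivity then requires a separate trick, writing $Z=Z_1+Z_2$ and convolving the continuous density of $X+Z_1$ with the strictly positive density of $Z_2$. You instead work directly on the spatial side with the representation $f_Y(y)=\E\{g(y-X)\}$ and push everything through dominated convergence, using only that $g$ is a Schwartz function and that the law of $X$ is a finite measure, so constant dominating functions suffice. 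Your route is more elementary (no Fourier analysis, no Riemann--Lebesgue) and yields positivity in one line, whereas the paper needs the extra splitting argument; what the Fourier route buys in exchange is the stronger intermediate fact that the characteristic function of $Y$ decays exponentially, which is of independent interest but not needed for the stated proposition. Your observation that every dominating bound is a deterministic constant---hence trivially integrable against any probability law---is precisely what makes the argument work with no regularity or moment assumption on $X$, which is the level of generality the proposition claims; it would be worth stating that induction on the order of $\alpha$ is what carries the differentiation-under-the-expectation step to all orders, but there is no gap.
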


\begin{proof}
Taking characteristic functions, $\E(e^{it\cdot Y}) = \E(e^{it\cdot X}) \cdot \E(e^{it\cdot Z})$, where $\hat{g}(t)=\E(e^{it\cdot Z})$ is the Fourier transform of the Gaussian density $g$. Now $\hat{g}(t)$ is also a Gaussian function with exponential decay at infinity and
$|\E(e^{it\cdot Y}) |\leq  |\E(e^{it\cdot X})| \cdot |\E(e^{it\cdot Z})| \leq \E(|e^{it\cdot X}|)| \cdot |\hat{g}(t)| =|\hat{g}(t)|$. Therefore, the Fourier transform of the probability measure of $Y$ (which is always continuous) also has exponential decay at infinity. 
%
%
In particular, this Fourier transform is integrable, and by the Riemann-Lebesgue lemma, $Y$ has a bounded continuous density which tends to zero at infinity. Similarly, for any monomial\footnote{Here we use the multi-index notation $t^\alpha=t_1^{\alpha_1}t_1^{\alpha_1}\cdots t_n^{\alpha_n}$.} $t^\alpha$, $(it)^\alpha\E(e^{it\cdot Y})$ is integrable and is the Fourier transform of the  $\alpha$th partial derivative of the density of $Y$, which is, therefore, also bounded continuous and tends to zero at infinity.

It remains to prove that the density of $Y$ is positive. Let $Z_1$, $Z_2$ be independent Gaussian random vectors with density~$\phi$ equal to that of $Z/\sqrt{2}$ so that $Z$ has the same distribution as $Z_1+Z_2$. By what has just been proved, $X+Z_1$ follows a continuous density $f$. Since $Y$ has the same distribution as $(X+Z_1) + Z_2$, its density is equal to the convolution product $f\!*\!\phi\,(y) = \int_\R \phi(z) f(y-z)\,\mathrm{d}z$. Now $\phi$ is positive,  and for any $y\in\R^n$,  $\int_\R \phi(z) f(y-z)\,\mathrm{d}z=0$  would imply that $f$ vanishes identically, which is impossible.
\end{proof}

\section{First Proof of Lemma~\ref{NG2G}}\label{NG2G1}

We use the notation $f$ for densities (p.d.f.'s).
In the first dimension,
for each $x^*_1\in\R$, define $\Phi_1(x^*_1)$ such that 
\begin{equation}
\int_{-\infty}^{\Phi_1(x^*_1)} f_{X_1} = \int_{-\infty}^{x^*_1} f_{X^*_1}.
 \end{equation}
Since the densities are continuous and positive, $\Phi_1$ is continuously differentiable and increasing; differentiating gives
\begin{equation}
f_{X_1}(\Phi_1(x^*_1)) \;  \frac{\partial \Phi_1}{\partial x^*_1}(x^*_1) = f_{X^*_1}(x^*_1)
\end{equation}
which proves the result in one dimension: $X_1$ has the same distribution as $\Phi_1(X^*_1)$ where $\dfrac{\partial \Phi_1}{\partial x^*_1}$ is positive.

In the first two dimensions, for each $x^*_1,x^*_2$ in $\R$, define  $\Phi_2(x^*_1,x^*_2)$ such that 
\begin{equation}
\int_{-\infty}^{\Phi_2(x^*_1,x^*_2)}\!\!\! f_{X_1,X_2}(\Phi_1(x^*_1),\,\cdot\,) \;  \frac{\partial \Phi_1}{\partial x^*_1}(x^*_1) = 
\int_{-\infty}^{x^*_2}\!\! f_{X^*_1,X^*_2}(x^*_1,\,\cdot\,).
\end{equation} 
Again $\Phi_2$ is continuously differentiable and increasing in $x^*_2$; differentiating gives
\begin{multline}
f_{X_1,X_2}(\Phi_1(x^*_1),\Phi_2(x^*_1,x^*_2)) \;  \frac{\partial \Phi_1}{\partial x^*_1}(x^*_1)  \frac{\partial \Phi_2}{\partial x^*_2}(x^*_1,x^*_2) \\=  f_{X^*_1,X^*_2}(x^*_1,x^*_2)
\end{multline} 
which proves the result in two dimensions. Continuing in this manner we arrive at
\begin{multline}
\begin{split}
&f_{X_1,X_2,\ldots,X_n}(\Phi_1(x^*_1),\Phi_2(x^*_1,x^*_2),\ldots,\Phi_n(x^*_1,x^*_2,\ldots,x^*_n))
 \\&\qquad\times   \frac{\partial \Phi_1}{\partial x^*_1}(x^*_1)  \frac{\partial \Phi_2}{\partial x^*_2}(x^*_1,x^*_2) \cdots \frac{\partial \Phi_n}{\partial x^*_n}(x^*_1,x^*_2,\ldots,x^*_n) 
 \end{split}
\\=  f_{X^*_1,X^*_2,\ldots,X^*_n}(x^*_1,x^*_2,\ldots,x^*_n) 
\end{multline}
which shows that $X=(X_1,X_2,\ldots,X_n)$ has the same distribution as $\Phi(X^*_1,X^*_2,\ldots,X^*_n)=\bigl(\Phi_1(X^*_1),\Phi_2(X^*_1,X^*_2),\ldots,
\linebreak[1]
\Phi_n(X^*_1,X^*_2,\ldots,X^*_n)\bigr)$. The Jacobian matrix of $\Phi$ has the form
\begin{equation}
\J_\Phi(x_1^*,x^*_2,\ldots,x^*_n)=
\begin{pmatrix}
\frac{\partial \Phi_1}{\partial x^*_1} & 0 & \cdots & 0\\
\frac{\partial \Phi_2}{\partial x^*_1} & \frac{\partial \Phi_2}{\partial x^*_2} & \cdots & 0\\
\hdotsfor{4}\\
\frac{\partial \Phi_n}{\partial x^*_1} & \frac{\partial \Phi_n}{\partial x^*_2} & \cdots & \frac{\partial \Phi_n}{\partial x^*_n}\\
\end{pmatrix}
 \end{equation}
where all diagonal elements are positive since by construction each $\Phi_k$ is increasing in $x^*_k$.\hfill\qedsymbol

\section{Second Proof of Lemma~\ref{NG2G}}\label{NG2G2}

We use the notation $F$ for distribution functions (c.d.f.'s). We also note
$F_{X_2|X_1}(x_2|x_1)=\P(X_2\!\leq\! x_2 \,|\, X_1\!=\!x_1)$ and let $F^{-1}_{X_2|X_1}(\cdot|x_1)$ be the corresponding inverse function in the  argument~$x_2$ for a fixed value of $x_1$. 
Such inverse functions are well-defined since it is assumed that $X$ is a random vector with continuous, positive density.

The inverse transform sampling method is well known for univariate random variables but its multivariate generalization is not. 

\begin{lemma}[Multivariate Inverse Transform Sampling Method (see, e.g., {\cite[Algorithm~2]{CasterEkenberg12}})]
\label{MITSM}
Let $U=(U_1,U_2,\ldots,U_n)$ be uniformly distributed on $[0,1]^n$. The vector $\Phi(U)$ with components
\begin{align}
\begin{split}
\Phi_1(U_1)&=F^{-1}_{X_1}(U_1)\\
\Phi_2(U_1,U_2)&=F^{-1}_{X_2|X_1}(U_2|\Phi_1(U_1))
\end{split}\notag\\
\begin{split}
&\;\;\vdots
\end{split}
\\
\Phi_n(U_1,U_2,\ldots,U_n)&= \notag\\\notag
&  {\mkern-108mu}      F^{-1}_{X_n|X_1,\ldots,X_{n-1}}(U_n|\Phi_1(U_1),\ldots,\Phi_{n-1}(U_1,\ldots,U_{n-1}))
\end{align}
has the same distribution as $X$.
\end{lemma}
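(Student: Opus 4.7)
My plan is to proceed by induction on the dimension $n$. The base case $n=1$ is the classical univariate inverse transform method: continuous positivity of the density makes $F_{X_1}$ a continuous strictly increasing bijection $\R\to(0,1)$, whence $\P(F^{-1}_{X_1}(U_1)\leq x_1)=\P(U_1\leq F_{X_1}(x_1))=F_{X_1}(x_1)$, so $\Phi_1(U_1)$ has the same distribution as $X_1$.

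For the inductive step, assume that $(\Phi_1(U_1),\ldots,\Phi_{n-1}(U_1,\ldots,U_{n-1}))$ is already distributed as $(X_1,\ldots,X_{n-1})$, and adjoin $\Phi_n$. The key structural observation to exploit is that $U_n$ is independent of $(U_1,\ldots,U_{n-1})$ and uniform on $[0,1]$. Conditioning on a realization $(u_1,\ldots,u_{n-1})$ and writing $\phi_i:=\Phi_i(u_1,\ldots,u_i)$, the map $u_n\mapsto \Phi_n(u_1,\ldots,u_n)$ is exactly the univariate inverse transform applied to the continuous strictly increasing CDF $F_{X_n\mid X_1,\ldots,X_{n-1}}(\,\cdot\,\mid\phi_1,\ldots,\phi_{n-1})$. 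By the base case, $\Phi_n$ is therefore conditionally distributed as $X_n$ given $(X_1,\ldots,X_{n-1})=(\phi_1,\ldots,\phi_{n-1})$.

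To assemble, for $(x_1,\ldots,x_n)\in\R^n$ I would condition on $(U_1,\ldots,U_{n-1})$ inside $\P(\Phi_1\leq x_1,\ldots,\Phi_n\leq x_n)$, producing the factor $F_{X_n\mid X_1,\ldots,X_{n-1}}(x_n\mid \Phi_1,\ldots,\Phi_{n-1})$ multiplied by the indicator of $\{\Phi_i\leq x_i\}_{i<n}$; the inductive hypothesis then replaces the random vector $(\Phi_1,\ldots,\Phi_{n-1})$ by $(X_1,\ldots,X_{n-1})$ in the resulting expectation, and the tower property for $X$ closes. An essentially equivalent density-level assembly would observe that $F_{X_k\mid X_1,\ldots,X_{k-1}}(\Phi_k\mid\phi_1,\ldots,\phi_{k-1})=u_k$ gives $k$th diagonal Jacobian entry $1/f_{X_k\mid X_1,\ldots,X_{k-1}}(\phi_k\mid\phi_1,\ldots,\phi_{k-1})$, so the change of variables collapses through the chain rule for densities to $f_X$.

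The only real technical point---the place I expect to need a sentence of care rather than a genuine obstacle---is checking that the conditional inverse function $(u_n,\phi_1,\ldots,\phi_{n-1})\mapsto F^{-1}_{X_n\mid X_1,\ldots,X_{n-1}}(u_n\mid\phi_1,\ldots,\phi_{n-1})$ is jointly measurable, so that substituting random conditioning values inside the conditional expectation is legitimate. Under the standing hypothesis of a continuous positive joint density, each conditional density is continuous and positive, each conditional CDF is continuous and strictly increasing in its main argument and varies continuously with the conditioning values, and joint continuity of the inverses follows from a routine implicit-function argument; measurability is then automatic.
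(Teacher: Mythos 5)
Your proof is correct and rests on the same ingredients as the paper's: the univariate inverse-transform fact applied coordinate by coordinate to the conditional CDFs, assembled via the chain rule (your tower-property/conditioning step is exactly this). The only difference is direction---you run a forward induction showing $\Phi(U)$ is distributed as $X$, whereas the paper inverts $\Phi$ and checks that $\bigl(F_{X_1}(X_1),F_{X_2|X_1}(X_2|X_1),\ldots,F_{X_n|X_1,\ldots,X_{n-1}}(X_n|X_1,\ldots,X_{n-1})\bigr)$ is uniform on $[0,1]^n$---an immaterial reorganization of the same argument.
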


\begin{proof}
By inverting $\Phi$, it is easily seen that an equivalent statement is that  the random vector
$\bigl(F_{X_1}(X_1), 
\linebreak[1]
F_{X_2|X_1}(X_2|X_1), \ldots, 
F_{X_n|X_1,\ldots,X_{n-1}}(X_n|X_1,\ldots,X_{n-1})\bigr)$ is uniformly distributed in $[0,1]^n$. Clearly $F_{X_1}(X_1)$ is uniformly distributed in $[0,1]$,
since 
\begin{equation}
\begin{split}
\P(F_{X_1}(X_1)\leq u_1) &= \P(X_1\leq F^{-1}_{X_1}(u_1))  \\&= F_{X_1}\circ F^{-1}_{X_1}(u_1)\\&=u_1.
\end{split}
\end{equation}
Similarly, for any $k>0$ and fixed $x_1,\ldots,x_{k-1}$, $F_{X_k|X_1,\ldots,X_{k-1}}(X_k|{X_1=x_1},{X_2=x_2},\ldots,X_{k-1}=x_{k-1})$ is also uniformly distributed in $[0,1]$. The result follows by the chain rule.
\end{proof}

\begin{proof}[Proof of Lemma~\ref{NG2G}]
By Lemma~\ref{MITSM}, $X$ has the same distribution as $\Phi(U)$, 
where each $\Phi_k(u_1,u_2,\ldots,u_k)$ is increasing in $u_k$.
Similarly  $X^*$ has the same distribution as $\Psi(U)$, where both $\Phi$ and $\Psi$ have (lower) triangular Jacobian matrices $\J_\Phi,\J_\Psi$ with positive diagonal elements. Then $X$ has the same distribution as $\Phi(\Psi^{-1}(X^*))$. By the chain rule for differentiation, the transformation $\Phi\circ\Psi^{-1}$ has Jacobian matrix $(\J_\Phi\circ\Psi^{-1}) \cdot \J_{\Psi^{-1}}=(\J_\Phi\circ\Psi^{-1}) \cdot (\J_\Psi\circ \Psi^{-1})^{-1}$. This product of (lower) triangular matrices with positive diagonal elements and is again (lower) triangular with positive diagonal elements. 
\end{proof}

\section*{Acknowledgment} 
The author would like to thank Max Costa, Tom Courtade and C\'edric Villani for their discussions, Tam\'as Linder for pointing out references~\cite{BiniaZakaiZiv74,Barron86,LinderZamir94} in connection with~\eqref{linder} and the anonymous reviewers for pointing out references~\cite{Barthe98,CorderoErausquinNazaretVillani04,CorderoErausquinMcCannSchmuckenschlager06} related to optimal transport theory.

\vspace*{0ex}

\end{document}